\documentclass[12pt]{article}%

\usepackage{amsmath, amsthm, amssymb, mathtools}
 \usepackage{graphicx, subfigure}

\usepackage[colorlinks, citecolor=blue, linkcolor=blue]{hyperref}
\usepackage{natbib}
\usepackage{fullpage}

\usepackage{color}

\usepackage{setspace}

\usepackage{csquotes}

\usepackage{relsize}

\newcommand{\is}{$i$'s }
\newcommand{\js}{$j$'s }

\newcommand{\sq}{\ensuremath{\subseteq}}

\newtheorem{Definition}{Definition}

\newtheorem{Theorem}{Theorem}
\newtheorem{Proposition}{Proposition}

\newtheorem{Corollary}{Corollary}

\newtheorem{Example}{Example}

\usepackage{eurosym}

%TODO

\begin{document}

\title{No Trade Under Verifiable Information\thanks{I would like to thank Arz\'e Karam, Jan Christoph Schlegel, an anonymous referee, and participants at the EC24 Workshop on Blockchains and Decentralized Finance at Yale, for useful comments. I would like to acknowledge financial support from the Economic and Social Research Council (ES/V004425/1). The previous title of this paper was ``No Trade in a Blockchain". Declarations of interest: none.}\\ 
\vspace{0.9cm}
\author{Spyros Galanis\thanks{Department of Economics, University of Durham, spyros.galanis@durham.ac.uk.}}}

\maketitle

\begin{abstract}
No trade theorems examine conditions under which agents cannot agree to disagree on the value of a security which pays according to some state of nature, thus preventing any mutual agreement to trade. A large literature has examined conditions which imply no trade, such as relaxing the common prior and common knowledge assumptions, as well as allowing for agents who are boundedly rational or ambiguity averse. We contribute to this literature by examining conditions on the private information of agents that reveals, or verifies, the true value of the security. We argue that these conditions can offer insights in three different settings: insider trading, the connection of low liquidity in markets with no trade, and trading using public blockchains and oracles. 
%The current paper extends this result by stud on common priors  We consider condition on verifiable information. In a decentralised and anonymous environment, such as a blockchain,  a trade is settled when the security's value is verified by an oracle, an intermediary who knows the state of nature. However, since the identity of agents in a blockchain is hidden, an oracle can impersonate a agent, by participating in a trade only when he knows that the outcome will be favourable. If other agents know that this is possible, they might be unwilling to participate in any trade that is offered. In this paper, we examine conditions on the private information of agents which are necessary and sufficient for a no trade result.

% In other words, anonymity of agents is incompatible with  verifiability of trades. We find that the requirement on verifiability is relatively mild. In particular, it is enough that,
%%  In an environment with different prior beliefs, we find that the following condition of partial verification of the security is sufficient for no trade and information aggregation.
% among all values of the security that it is common knowledge that some agent thinks are possible, there exists an oracle that can verify either the maximum or the minimum. 

\end{abstract} 

%\strut

\textbf{JEL Classification Numbers:} D82, D83, D84, G14, G41. \vspace{0.2cm}

%\strut

\textbf{Keywords:} No trade, Blockchain, oracles, agreeing to disagree, information aggregation.

\section{Introduction}
\label{introduction}

%Contrary to popular intuition, asymmetric information alone cannot explain high trading volumes in financial markets. 
Contrary to common belief, asymmetric information by itself is insufficient to account for the high trading volumes observed in financial markets.
\cite{aum76} first showed that if agents share a common prior, then it cannot be common knowledge that they disagree on the probability they assign to an event.  \cite{milSto82} use this result to show that if an allocation is ex ante Pareto efficient, then new asymmetric information will not lead to trade if agents are risk averse and hold concordant beliefs. There is now a large literature  examining conditions that generate no trade, such as relaxing the common prior or common knowledge assumptions, or having agents who are boundedly rational or ambiguity averse.

The current paper contributes to this literature by examining a different set of conditions that lead to no trade, 
%and in particular conditions on the private information of agents which result in revealing (or verifying) the true value of the traded security.
specifically examining how agents’ private information can reveal or verify the true value of the security being traded.
%examining when it is the case that knowledge about the value of the traded security generates a no trade result. 
To provide an example, suppose that it is common knowledge that agent $i$ always knows the true state and therefore the true value of the traded security. Then, no other agent would be willing to accept a buy or a sell order and there would be no trade, irrespective of whether the common prior assumption holds. The reason is that it cannot be common knowledge that we disagree on the value of the security, when it is common knowledge that agent $i$ knows its value. Hence, accepting a trade `must' be wrong if you are not agent $i$.  Is it possible to further weaken this condition regarding the verifiability of the security’s value while still implying no trade?  For example, we could require that the conjunction of everyone's knowledge verifies the true value of the security. Alternatively, what if there is an agent who always knows whether the value of the security is above or below a certain threshold? Are these conditions necessary or sufficient to preclude trade? To motivate these questions, we provide three examples.

The first is insider trading. The UK Criminal Justice Act 1993 states in Section 52 that ‘An individual who has information as an insider is guilty of insider dealing if [...] he deals in securities that are price-affected securities in relation to the information’. This clause can be interpreted to mean that insider trading occurs when the insider possesses private information significant enough to affect the price—or underlying value—of the security. Put differently, the insider knows whether the value of the security lies above or below a certain threshold. 

Suppose that agents bet on the financial performance of a company, by trading a derivative on the price of its share. If an employee has insider information which always reveals the true value of the company, it is straightforward that she should not be allowed to trade. The reason is not that she can make huge profits by `taking advantage' of other agents; it is because, if agents are rational, they will understand that this possibility exists and there will be no trade. A no trade result is detrimental in this case because the  price of the derivative may not be sufficiently close to its true value, hence the market would fail in aggregating information. However, what if the employee's knowledge is not enough to verify the true value of the company, but only whether it is above a certain threshold? By understanding the conditions on verifiability that result in no trade, we can determine under which conditions we should allow insiders to participate in a financial market.

The second example links low liquidity with no trade. In 1998, the Long-Term Capital Management (LTCM) hedge fund was facing collapse due to the Russian debt default. The resulting fire sales and the costly disruptions for the world financial markets led the Federal Reserve Bank of New York to hold meetings to recapitalise LTCM. \cite{cai03} documents how market makers in the bonds futures market made huge profits using privileged information on customer order flow. What happens to trading if it becomes common knowledge that market insiders know which assets are distressed and therefore have a low value? If this leads to a no trade result, then the liquidity of the market will dry up, exacerbating the price impact of the insiders and the distressed sellers. \cite{cai03} shows, in fact,  that market liquidity was very low during this episode.%
\footnote{\cite{brunnermeier01} provides an overview of the literature on asymmetric information and stock market crashes, whereas \cite{brunnermeierPedersen05} construct a model of predatory trading with distressed sellers. \cite{karamBogoev23} provide an empirical investigation of the connection between low liquidity and momentum trading.}

The third example is trading on the blockchain. A public blockchain enables a decentralised trading environment, where completing a transaction does not require  a third party which custodies the agents' assets, verifies their ownership, or executes the transaction. All transactions are public and trades are executed algorithmically using {\it smart contracts}, when pre-specified events occur. 
%However, real-world events must be reported on the blockchain. 
For example, paying an option on a real-world asset requires the reporting of its value within the blockchain, so that the smart contract can be executed.
%The promise of  blockchain technology is that it can facilitate the trading of all kinds of securities, such as options, futures and stocks, in a decentralised environment, without relying on a third party which custodies the agents' assets, verifies their ownership, or executes the transaction. These trades are executed algorithmically using {\it smart contracts}, when pre-specified events occur. However, if the relevant event occurs outside the blockchain, then its realisation must be verified. For example, paying an option on a real-world asset requires the reporting of its value within the blockchain, so that the smart contract can be executed.
%
%purpose of this paper is to examine this tension, by studying conditions on the private information of individual agents that are necessary and sufficient for no trade. We therefore contribute to the literature which stems from \cite{aum76}, who showed that agents cannot agree to disagree. This literature has so far studied conditions on priors, common knowledge and bounded rationality to explain trade, but not on the private information of individual agents.   
In practice, the verification and reporting  of real-world events to the blockchain is performed by {\it oracles}.%
\footnote{\cite{garrattMonnet22} show that oracles are unavoidable, because the truthful reporting about the realisation of publicly observed events cannot be implemented as a unique equilibrium in a completely decentralized environment.}
%These trades are executed when some oracles know the true state of nature, or the value of the underlying security. 
However, because users of the blockchain are anonymous, nothing stops oracles from participating in the trade, especially when they know it is going to be favourable for them.%
\footnote{More accurately, they are pseudonymous, because the history of transactions is public and can be linked to the unique addresses.}
 Will this `insider trading' make trading on the blockchain inherently undesirable? 
% What matters is whether oracles learn the value of the security before or after the trade is agreed.  If they learn it after the trade has been agreed, for example when betting on the next Presidential election, then anonymity plays no role. If they learn it before, however, some agents may be reluctant to participate in any trades at all. An example is betting on the financial performance of a company, using a derivative on the price of its stock. An employee with insider information could play the role of the oracle, thus facilitating the trade, but also participate in it. Alternatively, an oracle can participate in a trade right after they learn the value of the asset and just before  they report it on the blockchain. In these cases, there is an inherent incompatibility between the anonymity of agents and the verifiability of the value of securities. The former is essential for  a decentralised system, whereas the latter is necessary for executing trades. 

%The purpose of this paper is to examine this tension, by studying conditions on the private information of individual agents that are necessary and sufficient for no trade. We therefore contribute to the literature which stems from \cite{aum76}, who showed that agents cannot agree to disagree. This literature has so far studied conditions on priors, common knowledge and bounded rationality to explain trade, but not on the private information of individual agents.  

We examine conditions on the verifiability of securities (that is, revealing their true value) using three different environments.  In all settings, there are finitely many states and agents, equipped with asymmetric information and different priors. Security $X$ pays a dividend according to the state of nature. Trades are agreed in the interim stage, after agents receive their private information but before the true state is revealed. The first environment is static and follows \cite{aum76}. We say that there is common knowledge trade at some state if  the agents' expected values of the security are common knowledge but different.  The second environment is dynamic with infinitely many periods, following \cite{geaPol82}. In each period, one agent announces (truthfully) his expected value of the security. All other agents update their information based on the announcement and form a new expected value. In the next period, another agent announces his expected value, and so on. We say that agents cannot disagree forever if they eventually agree on the expected value of the security. The third environment is also dynamic and follows \cite{ostrovsky12}. Agents are myopic and payoffs are determined using market scoring rules, which are most often used in prediction markets. We say that there is information aggregation if the price of the security always converges to its true value. 

What conditions on the private information of agents lead to no trade and information aggregation? On one extreme, we say that the security is {\it verifiable} if, for each state, there is some agent who knows its value in the interim stage, when the trade is agreed. This condition is strong and it  implies that there cannot be common knowledge trade. On the other extreme, the security is {\it collectively verifiable} if the conjunction of the knowledge of all agents reveals its true value. In this case, it is possible that no-one knows the value of the security, when the trade is agreed. We show by example that collective verifiability does not lead to no trade. 

Our main results involve two intermediate properties. We say that the security is {\it maxmin verifiable} if there exists a agent who knows whether the maximum (or minimum) has occurred, among all values of the security that it is common knowledge that are possible. It is {\it threshold verifiable} if there is a threshold $x$ within those values and a agent who knows at some state whether the security's value is strictly above or below $x$. Even though maxmin verifiability is significantly weaker than verifiability, we show that it is sufficient for no trade. Our main result, Theorem \ref{char}, shows that threshold verifiability is necessary and sufficient for no common knowledge trade. In a dynamic setting, however, a security that is threshold or maxmin verifiable at time $t$, may not be at time $t+1$. This means that these two properties are not sufficient for no trade or information aggregation, unless we assume that they hold at a distant time $t^{*}$, when agents have stopped updating their announcements.

\subsection{Related literature}
\label{literature}

%The \textit{no trade theorems} stem from \cite{aum76}, who showed that there cannot be common knowledge trade with common priors. 
%%
%\footnote{\cite{aum76} only analysed the case of agreeing on the posterior of an event $E$, however this is easily generalisable to the expectation of a random variable and common knowledge trade.}
%%
Several papers (\cite{mor94}, \cite{feinberg00}, \cite{samet98}, \cite{bonannoNehring99}, \cite{halpern02}, \cite{ng03}) characterize the existence of a common prior with respect to the condition that there does not exist a mutually beneficial trade that is common knowledge at all states. \cite{gea89} showed that trade can still occur with common priors if agents are boundedly rational and make information processing errors. These results were extended in an environment with unawareness by \cite{gal13, galanis18}, \cite{hms13} and \cite{meierSchipper14etbu}. The current paper differs from this literature because it imposes conditions on the private information of agents, rather than on priors or bounded rationality.

%\cite{geaPol82} was the first to study a dynamic version of information communication in a non-strategic setting.
%characterize the existence of a (not necessarily positive) common prior, with respect to the weaker
%condition that there does not exist a mutually beneficial trade that is common knowledge at {\it all} states.
%\cite{geaPol82}, \cite{cave83}, \cite{sebGea83}, \cite{nielsen84}, \cite{bacharach85} and \cite{nielsenEtAl90} study information communication in a non-strategic setting, where agents announce posterior beliefs or other aggregate statistics. 
%However, they do not fully characterize under what conditions the consensus yields the true posterior or expectation of the security, a gap which is filled by \cite{demarzoSkiadas98, demarzoSkiadas99}. 
%go beyond the consensus result of the former papers, finding necessary and sufficient conditions for information aggregation. 

\cite{ostrovsky12} and \cite{chenEtAl12} show that in a market with either myopic or strategic agents,  {\it separable} securities are both necessary and sufficient for information aggregation. Their models are based on market scoring rules (\cite{hanson03, hanson07}), hence their results are directly applicable to prediction markets (\cite{wolfersZitzewitz04}). Similar approaches can be found in \cite{dimitrovSami08}, \cite{galanisKotronis21}, \cite{galanisEtAl24} and \cite{galanisMikhalishchev25}, where the focus is on examining whether information gets aggregated under various assumptions regarding preferences and the signal structure, such as unawareness, ambiguity aversion, and costly information acquisition. 
%The information aggregation properties of separable securities are  studied  in an environment with unawareness by \cite{galanisKotronis19} and with ambiguity by \cite{galanisEtAl19}.
These papers identify classes of securities which ensure that information is aggregated. In the current model, our results apply to all securities, but the conditions on private information are imposed at a distant time $t^{*}$ where all information updating has concluded.

%\red{All the aforementioned papers study the effect on trade from common priors, common knowledge and how sophisticated the agents are in processing information. In contrast, this paper focuses on the effect of the knowledge that agents have at the time of agreeing to trade.}

The advantage of the market scoring rules (MSR) over more well-known market mechanisms, such as the continuous double auction, is that an agent can make her prediction/trade without waiting for another agent to take the opposite side, or submit a limit order and wait for it to be filled. This feature makes it an attractive mechanism for markets with relatively few participants who do not trade daily, or in markets with automated market makers. Automated market makers (AMMs) are widely used in Decentralized Finance, see \cite{schlegelEtAl22} for an axiomatization of the logarithmic MSR. For more on the connection between prediction markets and AMMs, see \cite{frongilloWaggoner17}, \cite{othmanSandholm11}, \cite{abernethyEtAl13}, and \cite{abernethyFrongillo12}.

There is a growing literature on the economics of blockchain and cryptocurrencies. \cite{biaisEtAl19} provides a game-theoretic analysis of the proof-of-work protocol, which is in Bitcoin used by the miners, who maintain and update the ledger. \cite{atheyEtAl16} and \cite{bohmeEtAl15} describe several empirical facts about the usage of Bitcoin, whereas \cite{chenEtAl19} examines the desirable properties of the proof-of-work protocol. \cite{easleyEtAl19} explains the emergence of transaction fees in Bitcoin using a game-theoretic analysis. \cite{budish18} argues that the payments to miners, for maintaining the blockchain, must be large relative to the one-off benefits of attacking it, thus making the blockchain vulnerable to attacks once it becomes valuable. \cite{schillingUhlig19} discusses the monetary policy implications when cryptocurrencies compete with traditional currencies, backed by a central bank. \cite{hinzenEtAl19} argues that the proof-of-work protocol is inherently unable to sustain a large volume of transactions. The reason is that a rise in transaction demand leads to an increase in block fees, attracting more miners and exacerbating network latency, thus delaying payment confirmation and making the payment platform less attractive for users. \cite{abadiBrunnermeier18} examines when record-keeping is better organised through a blockchain, instead of a traditional centralised intermediary.

In the current paper we assume that oracles do not act strategically and cannot lie about their reports, when they verify the value of the security. The issue of eliciting truthful reports in a strategic environment was first studied by \cite{prelec04} (Bayesian Truth Serum) and \cite{millerEtAl05} (peer-prediction mechanisms). In general, these mechanisms work by randomly pairing two agents and paying them according to how close their reports are. See \cite{jurcaBoi06}, \cite{witkowskiParkes12a, witkowskiParkes12b} for various extensions. \cite{goelEtAl19} uses the peer-prediction mechanism of \cite{radanovicEtAl16}, in order to study truthful reporting of oracles in a blockchain.

The paper is organised as follows. Section \ref{model} presents the basic model. Section \ref{trading environments} examines the three trading environments and presents the results on no trade. 
%Section \ref{discussion} discusses further issues that arise from our analysis.

\section{Model}
\label{model}
Let $I$ be a finite set of $n$ agents. Uncertainty is described by a finite set of states $\Omega$. Agent \is private information is represented by partition $\Pi_i$ and  prior $p_i$ that has full support on $\Omega$.
%\footnote{That is, $p_i$ assigns positive probability at all states in $\Omega$.}
%
We do not assume a common prior, so we allow $p_i \neq p_j$ for some $i,j \in I$.

Agent $i$ knows event $E$ at $\omega$ if $\Pi_i(\omega) \sq E$. This means that in all states that he considers possible at $\omega$, $E$ is true. Define $\Pi_i(F) = \underset{\omega' \in F}{\bigcup} \Pi_i(\omega')$ to be the set of all states that $i$  thinks are possible, if the true state is in $F$. Using this notation, we can say that $\Pi_j(\Pi_i(\omega))$ is the set of states that, at $\omega$, agent $i$ thinks that $j$ considers possible. If $\Pi_j(\Pi_i(\omega)) \sq F$, then we say that $i$ knows that $j$ knows $F$. An event $E$ is common knowledge at $\omega$ if  $\Pi_{i_n}(\Pi_{i_{n-1}} \ldots  (\Pi_{i_1}(\omega))) \sq E$, for any sequence of agents $i_1, \ldots, i_n$.%
\footnote{These notions are explained further in \cite{gea92}.}

Let $C(\omega)$ be the set of states that are reachable from $\omega$. Formally, $C(\omega)$ is the union of sets $\Pi_{i_n}(\Pi_{i_{n-1}} \ldots  (\Pi_{i_1}(\omega)))$, for any sequence of agents $i_1, \ldots, i_n$. Say that an event $E'$ is self-evident if, whenever it occurs, everyone knows it. Formally, for all $\omega' \in E'$ and $i \in I$, we have $\Pi_i(\omega') \sq E'$.Then, $C(\omega)$ can be described as the smallest self-evident event that contains $\omega$.%
\footnote{The partition generated by $C$ is called the finest common coarsening of the partitions of all agents.}
\cite{aum76} showed that an event $E$  is common knowledge at $\omega$ if and only if $C(\omega) \sq E$.

%in terms of a self-evident event $E'$, which whenever it occurs, everyone knows it. Formally, if $\omega' \in E'$, then $\Pi(\omega') \sq E'$ for all $i \in I$. We then have that event $E$ is common knowledge at $\omega$ if and only if it contains a self-evident event $E'$, so that $\omega \in E' \sq E$.
%An event $E \sq \Omega$ is common knowledge if everyone knows it, everyone knows that everyone knows it, and so on. 

A security is a function $X: \Omega \rightarrow {\mathbb R}$, where $X(\omega)$ is the security's payoff at state $\omega$.  Let $X(C(\omega)) = \underset{\omega' \in C(\omega)}{\bigcup} X(\omega')$ be the collection of all values of security $X$ that can be realised in states reachable from $\omega$. Intuitively, these are the values of $X$ such that some agent $i_1$ thinks that $i_2$ thinks that \ldots some $i_n$ considers possible. Alternatively, $X(C(\omega))$ is the smallest set of the security's values that it is common knowledge at $\omega$ that  agents consider possible. Let $\max X(C(\omega))$ be the maximum and $\min X(C(\omega))$ the minimum of these values.

When a state $\omega$ occurs, agent $i$ receives private information $\Pi_i(\omega)$ and updates his prior $p_i$. His expectation of $X$ at $\omega$ is therefore $e_i(\omega) \equiv \underset{\omega' \in \Pi_i(\omega)}{\sum} X(\omega') \frac{\pi_i(\omega')}{\pi_i(\Pi_i(\omega)}$. The event ``Agent \is expectation of $X$ is $e_i$" consists of all states $\omega$ such that $e_i(\omega) = e_i$. 

\subsection{Verifiability}

%In a decentralised environment, there are no trusted intermediaries that report the true value of $X$. Instead, the true value is verified by some agent,  called an oracle. However, an oracle can also participate in a trade, because the identities of all agents are hidden. If an oracle knows too much at the time the trade is agreed, the other agents may be unwilling to buy or sell. 

We  introduce four  properties on the private information of agents that verify the value of security $X$ and in the next section we examine their implications for trade. Note that we allow for different priors. Under a common prior, there would be no trade so verifibability would not have any implications.

\begin{Definition}
Security $X$ is verifiable if, for each $\omega \in \Omega$, there exists $i \in I$ such that $\Pi_i(\omega) \sq X^{-1}(k)$, for some $k \in {\mathbb R}$. 
%It is collectively verifiable if $\underset{i \in I}{\bigcap} \Pi_i(\omega) \sq X^{-1}(k)$.
\end{Definition}

Verifiability specifies that, at each state, there exists at least one agent  who knows the true value of the security. 
%By verifying the true value and reporting it on the decentralised network, it becomes common knowledge and the trades can be settled. For example, suppose there is an agreed trade between two agents, where $i$ buys 2 units of security $X$ from $j$. If the value of the security is $X(\omega) = \$5$ and there is an oracle $k$ who can verify its true value, so that $\Pi_k(\omega) \sq X^{-1}(5)$, then  agent $i$ receives $\$10$ from agent $j$. However, if the oracle has this information before the trade is agreed, he will try to participate in order to take advantage of either $i$ or $j$. 
This is a strong property that implies no common knowledge trade. Maxmin verifiability, which is weaker, specifies that among all the values of $X$ that it is common knowledge that they are possible, there is a agent who knows either the  highest or  lowest value of $X$, when it is true. For example, consider a security which is linked to the value of a new oil well. The possible values are between zero and some positive number, depending on how much oil can be extracted. An expert can determine the range of values by running tests if given access to the well, however if there is no oil at all then she will be able to verify this. Hence, if the expert is also a agent in the market, maxmin verifiability is satisfied, because when one extreme value is true, at least one agent knows it. 

Another example involves scientists engaged in the development of a breakthrough drug, either within a pharmaceutical firm or a regulatory authority overseeing its approval. While the success of the drug remains uncertain, the firm's valuation is constrained within a certain range, sustained by ongoing research efforts and investment. However, once the drug is either demonstrably successful or on the verge of regulatory approval, it becomes apparent that the firm's value will approach its maximum. If these scientists are permitted to participate in trading before the company's announcement, the condition of maxmin verifiability is fulfilled.

%Another example is that of scientists researching a new breakthrough drug within a pharmaceutical company, or  working for the regulator that approves the drug. As long as the drug is not a success, the value of the company is within a range, because there is funding to keep trying. However, once the drug is successful or it is going to be approved by the regulator, they know that the value of the company will reach its maximum. If these scientists are allowed to trade, then maxmin verifiability is satisfied.  

Finally, a catastrophe bond (CAT bond) provides high yields to investors who buy it but the entire principal is lost if a specific disaster, such as a hurricane, occurs before it matures. The issuance of CAT bonds is premised on the violation of maxmin verifiability, so that it is common knowledge that no scientist could predict with probability 1 that a hurricane of a certain magnitude will occur .%
\footnote{The maturity of CAT bonds is usually between 1 and 5 years. Longer maturities are probably infeasible as scientists would accurately predict that a catastrophic event will occur with probabiltiy close to 1, hence maxmin verifiability is satisfied and such bonds would not be traded.}
%

%among the possible payoffs of a security linked to the quarterly earnings of a company that it is common knowledge they could occur, there is a agent who knows that the maximum (or the minimum) has occurred, when the trade is agreed.  For instance, the worst performance for the company when there is an insurance claim This could happen because when the best case 

\begin{Definition}
Security $X$ is maxmin verifiable  at $\omega$ if  there exist $i \in I$ and $\omega' \in C(\omega)$ such that $\Pi_i(\omega') \sq X^{-1}(k)$, where $k \in \{\max X(C(\omega)), \min X(C(\omega))\}$.
\end{Definition}

This is a significant weakening of verifiability because only one value is verified, not all. Moreover, it is  possible that the true value $X(\omega)$ is different from both  $\max X(C(\omega))$ and  $\min X(C(\omega))$. Nevertheless, we show that this property is still strong enough to preclude common knowledge trade. The following property is weaker than maxmin verifiability.

% We say that security $X$ is maxmin verifiable at $\omega$ if some $i_1$ thinks that $i_2$ thinks that \ldots some $i_{n-1}$ thinks possible that $i_n$ knows the maximum or the minimum. 

%\begin{Definition}
%Security $X$ is threshold verifiable  at $\omega$ if  there exist $i \in I$ and $\omega' \in C(\omega)$ such that $\Pi_i(\omega') \sq X^{-1}(k)$, where $k \in \{\underset{\omega'' \in C(\omega) }{min} X(\omega''), \underset{\omega'' \in C(\omega)}{max} X(\omega'')\}$.
%\end{Definition}

\begin{Definition}
Security $X$ is threshold verifiable  at $\omega$ if, whenever $X(C(\omega))$ is not constant, there exist $i \in I$ and $\omega',\omega'' \in C(\omega)$ such that $\underset{\omega_0 \in \Pi_i(\omega')}{\max}X(\omega_0) < x < \underset{\omega_0 \in \Pi_i(\omega'')}{\min}X(\omega_0)$, for some $x \in {\mathbb R}$.
\end{Definition}

If security $X$ is threshold verifiable, there are two cases. First, $X(C(\omega))$ is constant, so its unique value is common knowledge at $\omega$. Second, $X(C(\omega))$ is not constant but
% it is threshold verifiable at $\omega$ if, among all the values of $X$ that it is common knowledge that they are possible, 
there is a threshold $x$ and a agent  who knows at some state whether the value of the security is strictly above or below $x$. It is important to emphasise that the agent knows whether the value is above or below the threshold at {\it some} state, not at all states.
%there is a agent  who can distinguish between the low and high values of $X$ whether the value of the security is above or below $x$. 
Theorem \ref{char} shows that threshold verifiability is necessary and sufficient for no common knowledge trade.

Examples for threshold verifiability are similar to the ones for maxmin verifiability, but the requirements are weaker. A scientist working in a pharmaceutical company knows whether its value is below or above a threshold, as he has inside information on how well the new drugs perform in tests. Similarly, an oil expert knows whether the value of the well will be above or below a threshold, depending on the outcome of the tests she conducts.

The final property specifies that the value of the security is verified only when all agents share their private information.

\begin{Definition}
Security $X$ is collectively verifiable if, for each $\omega \in \Omega$, $\underset{i \in I}{\bigcap} \Pi_i(\omega) \sq X^{-1}(k)$, for some $k \in {\mathbb R}$.
\end{Definition}

Collective verifiability is the minimum requirement which allows agents to agree a trade now and reveal their private information at a later date, so that the trade is settled and payments are made. We show with an example that this property is not strong enough to preclude common knowledge trade.

\section{Trading environments}
\label{trading environments}
We explore three different trading environments. The first is static, whereas the other two are dynamic.

\subsection{We cannot agree to disagree}
We say that there is common knowledge trade at $\omega$  if the agents' expectations about the value of $X$ are common knowledge at $\omega$, but different. This is interpreted as agreeing to trade with each other, fully taking into account that the others  are also willing to do so. For example, if $e_i > e_j$, then agent $i$ is willing to buy some units of $X$ from $j$. See Section \ref{multiple securities}, where we extend trading to a collection of securities $\{X_i\}_{i \in I}$ with $\underset{i \in I}{\sum} X_i = 0$. Recall that \cite{aum76} shows that common knowledge trade is impossible with a common prior, however in this setting priors can be different.
\begin{Definition}
There is common knowledge trade at $\omega$ if the event ``agent \is expectation of $X$ is $e_{i}$, for each $i \in I$'' is common knowledge at $\omega$, yet $e_{i} \neq e_{j}$ for some $i,j \in I$.
\end{Definition}

The following theorem shows that threshold verifiability is necessary and sufficient to preclude any common knowledge trade. To prove necessity, we make the additional assumption that the security pays differently across states, so that  $X(\omega) \neq X(\omega')$ for all $\omega, \omega' \in \Omega$. This precludes the uninteresting case where the intersection of the security's values that each agent considers possible across states, $\underset{\omega' \in C(\omega)}{\bigcap} [\underset{\omega_0 \in \Pi_i(\omega')}{\min}X(\omega_0), \underset{\omega_0 \in \Pi_i(\omega')}{\max}X(\omega_0) ]$, is a singleton and the same for all agents, so that there is no common knowledge trade for any set of priors.

\begin{Theorem} \label{char}
%If security $X$ is threshold verifiable at $\omega$,  then there cannot be common knowledge trade at $\omega$.
% %If at $\omega$ the agents' expectations about $X$ are common knowledge, then they are the same. 
% \red{this is a proof for threshold verifiability.} \red{Conversely, if $X$ is not threshold verifiable at $\omega$, then there exist prior beliefs   $\{p_i\}_{i \in I}$  such that there cannot be common knowledge trade at $\omega$. Conversely, if there is no common knowledge trade at $\omega$ for any prior beliefs and $X$ is not constant, then it is  threshold verifiable at $\omega$.}

%If security $X$ is threshold verifiable at $\omega$, then there is no common knowledge trade at state $\omega$, for any set of priors. Conversely,  if $X$ is not threshold verifiable at $\omega$, then there is common knowledge trade at $\omega$ and for some set of priors.
Security $X$ is threshold verifiable at $\omega$ if and only if there is no common knowledge trade at $\omega$, for any set of priors.
%and only if security $X$ is threshold verifiable at $\omega$.
\end{Theorem}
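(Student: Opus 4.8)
The plan is to prove the two directions separately, using Aumann's characterization that $E$ is common knowledge at $\omega$ iff $C(\omega) \sq E$, so that everything can be analyzed inside the self-evident event $C(\omega)$.

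For \textbf{sufficiency} (threshold verifiability $\Rightarrow$ no common knowledge trade), suppose toward a contradiction that there is common knowledge trade at $\omega$: the event ``agent $i$'s expectation of $X$ is $e_i$ for each $i$'' is common knowledge at $\omega$, with $e_i \neq e_j$ for some pair. If $X(C(\omega))$ is constant, then every agent's posterior expectation at every state in $C(\omega)$ equals that constant, so all $e_i$ coincide — contradiction. So $X(C(\omega))$ is not constant, and threshold verifiability gives $i \in I$, $\omega',\omega'' \in C(\omega)$ and a threshold $x$ with $\max_{\omega_0 \in \Pi_i(\omega')} X(\omega_0) < x < \min_{\omega_0 \in \Pi_i(\omega'')} X(\omega_0)$. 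Then $e_i(\omega') < x < e_i(\omega'')$, since a posterior expectation is a weighted average of the $X$-values on the cell and $p_i$ has full support. Hence agent $i$'s expectation is \emph{not} constant on $C(\omega)$, so the event ``agent $i$'s expectation is $e_i$'' cannot contain $C(\omega)$ and is therefore not common knowledge at $\omega$ — contradiction. This direction holds for every profile of priors.

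For \textbf{necessity} (no common knowledge trade for \emph{all} priors $\Rightarrow$ threshold verifiability), I argue the contrapositive: assume $X$ is \emph{not} threshold verifiable at $\omega$, so $X(C(\omega))$ is not constant and yet for \emph{every} agent $i$ and every pair $\omega',\omega'' \in C(\omega)$ the intervals $I_i(\omega') := [\min_{\Pi_i(\omega')}X, \max_{\Pi_i(\omega')}X]$ and $I_i(\omega'')$ cannot be strictly separated by a point — equivalently, these intervals pairwise intersect, so $J_i := \bigcap_{\omega' \in C(\omega)} I_i(\omega') \neq \emptyset$ (a finite family of pairwise-intersecting real intervals has the Helly property). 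I want to construct priors making some agent's expectation constant on $C(\omega)$ while the constants differ across agents. Using the assumption $X(\omega)\neq X(\omega')$ for all $\omega\neq\omega'$, on each cell $\Pi_i(\omega')$ the map $X$ is injective, so for any target value $t$ in the open interval $(\min_{\Pi_i(\omega')}X,\max_{\Pi_i(\omega')}X)$ there is a full-support conditional distribution on that cell with mean exactly $t$; and the non-constancy of $X(C(\omega))$ together with injectivity lets me arrange that $J_i$ has nonempty interior, or handle the boundary case separately. Fixing $t_i \in \mathrm{int}\,J_i$ and choosing, for each $i$, conditional distributions on all its cells in $C(\omega)$ with mean $t_i$ (and arbitrary full-support behavior off $C(\omega)$, glued consistently into a global full-support prior $p_i$), agent $i$'s expectation is identically $t_i$ on $C(\omega)$, hence common knowledge there. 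If the $t_i$ can be chosen not all equal — which is possible unless every $J_i$ is the \emph{same} singleton, the degenerate case explicitly excluded by the hypothesis $X(\omega)\neq X(\omega')$ in the theorem statement — we obtain common knowledge trade, contradicting the hypothesis.

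The \textbf{main obstacle} is the necessity direction: making precise the Helly-type argument that non-threshold-verifiability forces the cell-intervals of each agent to have a common point, and then checking that one can simultaneously (i) pick distinct target means $t_i$ in the interiors $\mathrm{int}\,J_i$ — showing the excluded ``common singleton'' case is exactly when this fails, which is where the injectivity assumption $X(\omega)\neq X(\omega')$ does its work — and (ii) assemble these per-cell conditional means into genuine global priors with full support on all of $\Omega$ that are consistent across the (disjoint) cells of each agent's partition. The sufficiency direction and the constant-value bookkeeping are routine by comparison.
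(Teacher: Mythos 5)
Your proposal takes essentially the same route as the paper: sufficiency by observing that a threshold strictly separates some agent's posterior expectations across two states in $C(\omega)$, so those expectations cannot be common knowledge; necessity via the contrapositive, using the fact that failure of threshold verifiability makes the cell-intervals of each agent pairwise intersecting (hence, by Helly in one dimension, $J_i := \bigcap_{\omega' \in C(\omega)} I_i(\omega') \neq \emptyset$) and then constructing priors that hold each agent's posterior mean constant at some $k_i \in J_i$. You are in fact more careful than the paper at the step you flag as the ``main obstacle.'' The paper asserts that injectivity of $X$ makes $J_i$ non-degenerate, which is not quite right: $J_i$ can be a singleton (e.g.\ when some cell of $\Pi_i$ in $C(\omega)$ is a singleton $\{\omega_c\}$ whose value $c=X(\omega_c)$ lies inside every other cell-interval). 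The correct resolution, which closes your gap, is this: if $J_i=\{c_i\}$, then injectivity forces $\max m_i^{\omega'}=\min M_i^{\omega'}=c_i$ to be attained by the same state $\omega_{c_i}$, whose $\Pi_i$-cell must therefore be the singleton $\{\omega_{c_i}\}$; consequently $c_i$ is \emph{strictly} interior to every other cell-interval of $i$ in $C(\omega)$ (the boundary case you worried about cannot arise), so full-support conditionals with mean $c_i$ still exist. Moreover, if every $J_i$ were the \emph{same} singleton $\{c\}$, then $\{\omega_c\}$ would be a cell of every agent's partition, hence self-evident, forcing $C(\omega)=\{\omega_c\}$ and contradicting non-constancy of $X$ on $C(\omega)$; so one can always pick $k_i \neq k_j$ for some pair. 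With this bookkeeping in place, your construction goes through exactly as in the paper.
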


\begin{proof}

Suppose $X$ is threshold verifiable at $\omega$. If $X(C(\omega))$ is constant, everyone knows the security's value and there is no common knowledge trade. If  $X(C(\omega))$ is not constant, there exist $i \in I$ and $\omega',\omega'' \in C(\omega)$ such that $\underset{\omega_0 \in \Pi_i(\omega')}{\max}X(\omega_0) < \underset{\omega_0 \in \Pi_i(\omega'')}{\min}X(\omega_0)$. This implies that for any prior $p_i$, \is expected value of $X$ at $\omega'$ is different from \is  expected value of $X$ at $\omega''$. Because they are different, expected values cannot be common knowledge at $\omega$, hence there is no common knowledge trade at $\omega$. 

Conversely, suppose that $X$ is not threshold verifiable at $\omega$, hence $X(C(\omega))$ is not constant. We will show that there is common knowledge trade at $\omega$ for some set of priors. The negation of threshold verifiability implies that for each $i \in I$ we have $\underset{\omega' \in C(\omega)}{\bigcap} [m_i^{\omega'}, M_i^{\omega'} ] \neq \emptyset$, where $m_i^{\omega'} = \underset{\omega_0 \in \Pi_i(\omega')}{\min} X(\omega_0)$ is the minimum and $M_i^{\omega'} = \underset{\omega_0 \in \Pi_i(\omega')}{\min} X(\omega_0)$ is the maximum value of $X$ given \is partition cell at $\omega'$. We need to show that there is common knowledge trade at $\omega$ for some set of priors $\{p_i\}_{i \in I}$. For each $i \in I$, pick $k_i \in \underset{\omega' \in C(\omega)}{\bigcap} [m_i^{\omega'}, M_i^{\omega'} ] \neq \emptyset$. Because $X$ pays differently at each state, so that $X(\omega') \neq X(\omega'')$ for all $\omega', \omega'' \in \Omega$, we have that $\underset{\omega' \in C(\omega)}{\bigcap} [m_i^{\omega'}, M_i^{\omega'} ]$ is not a singleton. This implies that we can choose $k_i$ such that $k_i \in (m_i^{\omega'}, M_i^{\omega'} )$ and we can find $i,j \in I$ such that $k_i \neq k_j$.

We interpret $k_i$ as \is expected value of $X$, which is constant for all $\omega' \in C(\omega)$. Using the following procedure, we construct, for each $i \in I$, a prior $p_i$ that generates $k_i$ as the expected value of $X$, given each partition cell $\Pi_{i}(\omega')$, $\omega' \in C(\omega)$.  Given $\omega' \in C(\omega)$, we have that $k_i \in (m_i^{\omega'}, M_i^{\omega'} )$. This implies that we can find a posterior $p_i^{\omega'}$ with full support on $\Pi_i(\omega')$, such that $E_{p_i^{\omega'}}[X] = k_i$. This is true for all partition cells within $C(\omega)$. Let ${\cal P}^i_\omega$ be the collection of these generated posteriors.  By assigning  positive weights $\pi(\omega')$ to posteriors $p_i^{\omega'} \in {\cal P}^i_\omega$ that add up to 1, we construct the prior $p_i = \underset{p_i^{\omega'} \in {\cal P}^i_\omega}{\sum} \pi(\omega')p_i^{\omega'}$ for agent $i$. Note that there are infinitely many such priors. Each $k_i$ is constant across all partition cells in $C(\omega)$, hence the expected values of $X$ are common knowledge. Because $k_i \neq k_j$ for some $i,j \in I$, there is common knowledge trade $\omega$.

\end{proof}

The theorem specifies that if threshold verifiability fails, then there is common knowledge trade for some set of priors. However, the proof provides a stronger result, that there is common knowledge trade for infinitely many sets of priors. To provide some intuition, note that if threshold verifiability fails at $\omega$, then for each $i \in I$ we have $\underset{\omega' \in C(\omega)}{\bigcap} [m_i^{\omega'}, M_i^{\omega'} ] \neq \emptyset$, where $m_i^{\omega'} = \underset{\omega_0 \in \Pi_i(\omega')}{\min} X(\omega_0)$ is the minimum and $M_i^{\omega'} = \underset{\omega_0 \in \Pi_i(\omega')}{\min} X(\omega_0)$ is the maximum value of $X$ given \is partition cell at $\omega'$. For each $i \in I$, we can pick $k_i \in \underset{\omega' \in C(\omega)}{\bigcap} [m_i^{\omega'}, M_i^{\omega'} ] \neq \emptyset$, which is interpreted as agent \is expected value of $X$, constant across all his partition cells. If $k_i \neq k_j$ for at least two agents, then there is common knowledge trade. We can find infinitely many such vectors $\{k_i\}_{i \in I}$ by choosing different points in the intervals $\underset{\omega' \in C(\omega)}{\bigcap} [m_i^{\omega'}, M_i^{\omega'} ] \neq \emptyset, i \in I$. Moreover, each such vector $\{k_i\}_{i \in I}$ can be generated by infinitely many sets of priors, as we describe in the proof.

Note that maxmin verifiability implies no common knowledge trade, as it is stronger than threshold verifiability. The following example shows that if the security is collectively verifiable, then there can be common knowledge trade. 

\begin{Example} \label{collectively verifiable}
Let the state space be $\Omega = \{\omega_1, \omega_2, \omega_3, \omega_4\}$ and consider two agents. Agent 1's partition is $\Pi_1 = \{\{\omega_1, \omega_2\}, \{\omega_3, \omega_4\}\}$, whereas 2's partition is $\Pi_2 = \{\{\omega_1, \omega_3\}, \{\omega_2, \omega_4\}\}$. Agent 1's prior is  $p_1 = \{1/6,1/3,1/3,1/6\}$ and agent 2's is $p_2 = \{1/3,1/6,1/6,1/3\}$. The security is $X(\omega_1) = X(\omega_4) = 1$ and $X(\omega_2) = X(\omega_3) = -1$. It is collectively verifiable because at each state, the agents' collectively know the value of $X$. However,  at any state $\omega$ it is common knowledge that 1's expectation of $X$ is $-1/3$, whereas 2's expectation is $1/3$. This means that there is common knowledge trade, where $2$ buys the security from $1$. 
\end{Example}

\subsection{We cannot disagree forever}

%In a static environment, we showed that threshold verifiability is sufficient to preclude common knowledge trade. 
We now analyse a dynamic trading environment, based on \cite{geaPol82}. In every period, one agent announces his expected value of the security $X$. 
%which models agents taking turns in announcing their posterior belief about an event $E \sq \Omega$. 
Each announcement reveals some information to the other agents, who then update their own posterior beliefs. \cite{geaPol82} show that if the agents share a common prior, they will eventually agree on their expected value of $X$. If priors are different, however, agents may disagree forever.

%Our setting is similar, but differs in the following aspects. First, there is no common prior. Second, each agent announces his expected value of the security, instead of the posterior belief about an event. As with the static environment, the interpretation is that if the expectations are different, there is scope for trade. The trades are settled with information provided by the oracles. However, what happens if some of the oracles are able to disguise themselves as agents? This is facilitated by the decentralised nature of the blockchain, where the identity of the agents is hidden.

Formally, there are infinitely many periods $t = 0,1, 2, \ldots$. At period $t=0$ and state $\omega$, agent \is partition cell is denoted $\Pi^0_i(\omega)$ and the public information created by the announcement is $C^0(\omega) = \Omega$. Agents make announcements sequentially. At  period $t$,  agent $j$ announces his expected value of security $X$, according to his prior $p_j$ and  partition cell $\Pi_j^{t-1}(\omega) = \Pi_j^0(\omega) \cap C^{t-1}(\omega)$, which is formulated by his initial private   information $\Pi_j^0$ and the public information $C^{t-1}(\omega)$ that has been revealed by all previous announcements. The public information, created by \js announcement $e_t$ at period $t$, is defined as $C^{t}(\omega) = \{\omega' \in C^{t-1}(\omega): \underset{\omega'' \in \Pi_j^{t-1}(\omega')}{\sum} X(\omega'')\frac{p_j(\omega'')}{p_j(\Pi_j^{t-1}(\omega'))} = e_t\}$, the set of states which are consistent with all announcements up to $t$.

After each announcement, all  other agents update their own information, by excluding any states that would not result in agent $j$ making this announcement. Because the state space is finite, this process of updating of information will eventually stop. Suppose that this happens in period $t^*$. This means that each agent $j$ will make the same announcement $e_{j}$ at all states in $C^{t^*}(\omega)$.%
\footnote{If he was not making the same announcement, then at some $t > t^*$ the public information $C^{t^*}(\omega)$ would further shrink, contradicting that the updating of information has stopped.}
If they agree on their announcements, we say that they cannot disagree forever.

\begin{Definition}
Agents cannot disagree forever at $\omega$ if at $t^{*}$ we have $e_{i} = e_{j}$ for all $i,j \in I$.
\end{Definition}

Is threshold verifiability of $X$ at $t=0$ and $\omega$, or even maxmin verifiability, sufficient for no disagreement? 
%\cite{geaPol82} showed that a common prior implies that agents will eventually agree, so that they will end up announcing the same expectation of $X$.%
%\footnote{\cite{geaPol82} only analysed the case of announcing the posterior of an event $E$, however this is easily generalisable to the expectation of a random variable.}
%%
We  show by example that it is not. 
\begin{Example} \label{partial dynamic}
Let the state space be $\Omega = \{\omega_1, \omega_2, \omega_3, \omega_4, \omega_5\}$ and consider two agents. Agent 1's partition is $\Pi_1 = \{\{\omega_1, \omega_2\}, \{\omega_3, \omega_4\}, \{\omega_5\}\}$, whereas 2's partition is $\Pi_2 = \{\{\omega_1, \omega_3\}, \{\omega_2, \omega_4, \omega_5\}\}$. Agent 1's prior is  $p_1 = \{1/12,1/6,1/6,1/2, 1/2\}$ and agent 2's is $p_2 = \{1/6,1/12,1/12,1/6, 1/2\}$. The security is $X(\omega_1) = X(\omega_4) = 1$, $X(\omega_2) = X(\omega_3) = -1$ and $X(\omega_5) = 5$. The security is maxmin, and therefore partially, verifiable because at any state $\omega$, $C(\omega) = \Omega$ and agent 1 knows the maximum value of $X$ at $\omega_{5}$.
%$\underset{\omega_0 \in \Pi_1(\omega_{4})}{\max}X(\omega_0) < \underset{\omega_0 \in \Pi_1(\omega_{5})}{\min}X(\omega_0)$.

From  Theorem \ref{char}, we know that there  is no common knowledge trade at $t=0$. However, it is straightforward to show that there can be disagreement and trade in a dynamic setting. If the true state is $\omega \neq \omega_5$, agent 1 does not announce $5$, which informs agent 2 that the state is not $\omega_5$.  By excluding state $\omega_5$, the  partitions and the posteriors are updated  so that the example becomes identical to Example \ref{collectively verifiable}. Then, agent 1 announces 1/3 and agent 2 announces -1/3, so that there is no more updating and there is common knowledge trade and disagreement forever. 
\end{Example}

Disagreement and trade are possible because maxmin verifiability in period $t$ does not imply maxmin verifiability at $t+1$. The reason is that even if the maximum (or minimum) value of $X(C^t(\omega))$ is verified, this may not be true for  $X(C^{t+1}(\omega))$. In Example \ref{partial dynamic}, the maximum value 5 is verified at $t=0$ but at $t=1$ neither the maximum 1,  nor the minimum -1, are verified. In contrast, a common prior at $t$ continues to be common at $t+1$, as long as all agents use Bayes' rule, hence the result of \cite{geaPol82}, that agents cannot disagree forever. In order to preclude trade in our environment, we need threshold verifiability at period $t^*$, which is the period after which there is no more updating of information by anyone. Because at $t^*$ the announcement of each agent is constant in $C^{t^*}(\omega)$, threshold verifiability implies that $X(C^{t^*}(\omega))$ is constant. We therefore have the following Corollary.

\begin{Corollary}
If at $\omega$ and $t^*$ security  $X$ is threshold verifiable, then agents cannot disagree forever at $\omega$.
\end{Corollary}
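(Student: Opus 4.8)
The plan is to reduce the statement to the easy (sufficiency) half of Theorem~\ref{char}, applied at the terminal period $t^{*}$, using the fact that once information updating has stopped no agent's expected value of $X$ varies within $C^{t^{*}}(\omega)$. First I would spell out what ``threshold verifiable at $\omega$ and $t^{*}$'' means: it is the Definition of threshold verifiability with the common‑knowledge event $C(\omega)$ replaced by the public information $C^{t^{*}}(\omega)$ and each partition $\Pi_i$ replaced by the updated partition $\Pi_i^{t^{*}}$.

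The core step is to show that threshold verifiability at $t^{*}$ forces $X(C^{t^{*}}(\omega))$ to be constant. Suppose it were not. Then there would exist an agent $i$, states $\omega',\omega'' \in C^{t^{*}}(\omega)$, and a threshold $x$ with $\max_{\omega_0 \in \Pi_i^{t^{*}}(\omega')} X(\omega_0) < x < \min_{\omega_0 \in \Pi_i^{t^{*}}(\omega'')} X(\omega_0)$. Exactly as in the sufficiency argument of Theorem~\ref{char}, agent $i$'s expectation of $X$ at $\omega'$ is then a convex combination of values all below $x$, hence strictly below $x$, while his expectation at $\omega''$ is strictly above $x$; in particular the two expectations differ. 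But $\omega'$ and $\omega''$ both lie in $C^{t^{*}}(\omega)$, and by the defining property of $t^{*}$ each agent — in particular $i$ — makes the same announcement, i.e.\ has the same expectation of $X$, at every state of $C^{t^{*}}(\omega)$. This is a contradiction, so $X(C^{t^{*}}(\omega))$ must be constant.

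Once $X(C^{t^{*}}(\omega))$ is constant, its unique value is common knowledge at $t^{*}$, so every agent's updated posterior assigns that value with certainty; hence each announcement $e_j$ equals that common constant, giving $e_i = e_j$ for all $i,j \in I$ at $t^{*}$. That is precisely the definition of agents not being able to disagree forever at $\omega$.

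The only delicate point is the bookkeeping around $t^{*}$: one must invoke that the updating of information terminates and that ``terminates'' is correctly read as each agent's announcement being constant on $C^{t^{*}}(\omega)$. Both of these are established in the construction of the dynamic environment preceding the Corollary (including the footnote justifying that, were some announcement not constant on $C^{t^{*}}(\omega)$, the public information would shrink further), so no new argument is required there; everything else is a transcription of the sufficiency half of Theorem~\ref{char} to the period-$t^{*}$ information structure.
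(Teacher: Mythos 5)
Your proof is correct and follows exactly the same route as the paper: you use the terminal property of $t^{*}$ (each agent's announcement is constant on $C^{t^{*}}(\omega)$) to rule out the non-constant case in the definition of threshold verifiability, conclude that $X(C^{t^{*}}(\omega))$ is constant, and then read off agreement. This is precisely the argument the paper sketches in the paragraph preceding the Corollary.
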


\subsection{Information aggregation}

The question of no trade is closely related to the question of whether a security can aggregate information, so that the price always converges to its true value.
%This means that the fluctuating price of the traded security eventually reveals the private information of the agents, thus converging to a price which is equal to the true value of the security.
In this section we model trading using the market scoring rule, which is also used in prediction markets. Trading is organised as follows. At  $t = 0$, nature selects a state $\omega \in \Omega$ and the uninformed market maker makes a prediction $y_0$ about the value of security $X: \Omega \rightarrow {\mathbb R}$. At  $t = 1$, agent 1 makes a revised prediction $y_1$, at $t = 2$ agent $t_2$ makes his prediction, and so on. At  $t = n+1$, agent 1 makes another prediction $y_{n+1}$. Each prediction $y_k$ is required to be within the set $Y=[\underset{\omega\in \Omega}\min X(\omega), \underset{\omega \in \Omega}\max X(\omega)]$.

%Let $a(t)$ be the agent that makes a prediction at time $t$. All predictions are observed by all agents. 

%The process repeats until  $t_\infty = \lim_{k \rightarrow \infty} t_k$. At  $t^*>t_\infty$, 

The agents' payoffs are computed using a scoring rule, $s(y,x^*)$, where $x^*$  is the true value of the security  and $y$ is a prediction. A scoring rule is {\it proper} if, for any probability measure $p$ and any random variable $X$, the expectation of $s$  is maximised at $y=E_{p}[X]$. It is  {\it strictly proper} if $y$ is unique. We focus on continuous strictly proper scoring rules. Examples are the quadratic, where $s(y,x)=-(x-y)^{2}$, and the logarithmic, where $s(y,x)=(x-a)ln(y-a)+(b-x)ln(b-y)$ with $a<\underset{\omega\in \Omega}\min X(\omega), b>\underset{\omega\in \Omega}\max X(\omega)$. A agent's payoff from announcing $y_t$ at $t$,   is $s(y_{t},x^*)-s(y_{t-1},x^*)$, where  $y_{t-1}$ is the previous announcement and $x^*$ is the true value of the security. 

We assume that each agent is myopic, so that he cares only about the current payoff, when making an announcement. Because scoring rules are strictly proper and agents are myopic, their announcement will always be their expected value of $X$. Hence, this model is closely related to the model of the previous section.  
%This implies that each agent announces his expected value of the security.  
We say that information gets aggregated if the agents' predictions converge to the true value of the security, $X(\omega)$.

\begin{Definition} \label{def info agg}
Information gets aggregated at $\omega$ if sequence $\{y_k\}_{k=1}^{\infty}$ converges in probability to random variable $X(\omega)$. 
\end{Definition}

\cite{ostrovsky12} and \cite{chenEtAl12} show that, with common priors, information gets aggregated at all states if and only if the security is separable.%
\footnote{\cite{ostrovsky12} characterises this class of separable securities, which includes the Arrow-Debreu securities.}
%  \cite{galanisKotronis21} and \cite{galanisEtAl24} examine information aggregation in environments with unawareness and ambiguity aversion, respectively.}
%
This means that if the security is not separable, it is possible that agents agree on its expected value, but this is different from its true value at $\omega$.

In the previous section, we showed that threshold verifiability at $t^{*}$ is strong enough to imply agreement. Moreover, it implies that $X(C^{t^*}(\omega))$ is constant. Because there is only one value of $X$ that is possible, it must be the correct one and we have information aggregation. It is important to note that this result holds for all securities, not just the separable ones.

\begin{Corollary} \label{partial thm}
If  security $X$ is threshold verifiable at $t^*$ and $\omega$, then there is information aggregation at $\omega$.
\end{Corollary}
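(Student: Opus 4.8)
The plan is to show that threshold verifiability at $t^{*}$ forces $X(C^{t^{*}}(\omega))$ to be a single value; once the security has a unique possible value on the self-evident event reached from $\omega$, that value must be $X(\omega)$, every agent's announcement from round $t^{*}$ on equals it, and the prediction sequence is eventually constant at $X(\omega)$, hence converges (trivially in probability) to it.

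First I would record the defining property of $t^{*}$: after period $t^{*}$ no agent revises, so each agent $i$ makes one and the same announcement $e_{i}$ at every state of $C^{t^{*}}(\omega)$, where $e_{i}$ is $i$'s expected value of $X$ given the cell $\Pi_i^{t^{*}}(\omega')=\Pi_i^{0}(\omega')\cap C^{t^{*}}(\omega')$. Then I would argue by contradiction: if $X(C^{t^{*}}(\omega))$ were not constant, the definition of threshold verifiability (applied to the period-$t^{*}$ information structure, with partitions $\Pi_i^{t^{*}}$ and common-knowledge event $C^{t^{*}}(\omega)$) would give an agent $i$, states $\omega',\omega''\in C^{t^{*}}(\omega)$ and a number $x$ with $\max_{\omega_{0}\in\Pi_i^{t^{*}}(\omega')}X(\omega_{0})<x<\min_{\omega_{0}\in\Pi_i^{t^{*}}(\omega'')}X(\omega_{0})$. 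Since $e_{i}(\omega')$ is a convex combination of the values $X(\omega_{0})$ over $\Pi_i^{t^{*}}(\omega')$, it satisfies $e_{i}(\omega')\le\max_{\omega_{0}\in\Pi_i^{t^{*}}(\omega')}X(\omega_{0})<x$, and symmetrically $e_{i}(\omega'')\ge\min_{\omega_{0}\in\Pi_i^{t^{*}}(\omega'')}X(\omega_{0})>x$; hence $e_{i}(\omega')\neq e_{i}(\omega'')$, contradicting that $i$'s announcement is constant on $C^{t^{*}}(\omega)$. So $X(C^{t^{*}}(\omega))=\{v\}$ for a single $v$.

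Next I would pin down $v$ and the announcements. Because $\omega$ is reachable from itself, $\omega\in C^{t^{*}}(\omega)$, so $v=X(\omega)$. For each agent $i$, the cell $\Pi_i^{t^{*}}(\omega)\subseteq C^{t^{*}}(\omega)$ and $X$ is identically $X(\omega)$ there; since the scoring rule is strictly proper and agents are myopic, $i$'s announcement equals $i$'s posterior expectation of $X$, which is $X(\omega)$. Hence $y_{k}=X(\omega)$ for every round $k$ at or after $t^{*}$, the sequence $\{y_{k}\}_{k=1}^{\infty}$ is eventually constant equal to $X(\omega)$, and therefore converges in probability to $X(\omega)$ --- exactly information aggregation at $\omega$ in the sense of Definition \ref{def info agg}.

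I do not expect a genuine obstacle here; the only point needing care is matching the static definition of threshold verifiability to the period-$t^{*}$ information structure and checking that $\omega$ lies in the relevant self-evident event $C^{t^{*}}(\omega)$, so that the forced constant value of $X$ is the true value and not merely some other reachable value. The rest is the elementary fact that a myopic, strictly proper announcement lies between the minimum and the maximum of $X$ over the announcing agent's cell, which is what turns ``no revision at $t^{*}$'' together with threshold verifiability into ``the reachable set of the security's values is a point.''
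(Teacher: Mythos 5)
Your proposal is correct and follows exactly the paper's own argument: threshold verifiability at $t^{*}$, combined with the fact that at $t^{*}$ each agent's announcement is constant on $C^{t^{*}}(\omega)$, forces $X(C^{t^{*}}(\omega))$ to be a singleton, and since $\omega\in C^{t^{*}}(\omega)$ that singleton is $X(\omega)$, so all announcements from $t^{*}$ onward equal $X(\omega)$ and the sequence converges. The extra care you take in making the convexity step and the reachability of $\omega$ explicit is a useful elaboration, not a different route.
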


\subsection{Multiple securities}
\label{multiple securities}

In the current model, only one security, $X$,  is traded. In this section, we discuss how trading can be generalised to multiple securities, $\{X_i\}_{i \in I}$, one for each agent, so that $\underset{i \in I}{\sum}X_i = 0$. First, we fix the set of priors and  show that common knowledge trade with $X$ implies common knowledge trade with $\{X_i\}_{i \in I}$. Suppose, without loss of generality, that the only common knowledge event is the whole state space $\Omega$.  Then, our definition of common knowledge trade with $X$ is that each agent \is expected value given each of his partition cells is $e_i$, yet $e_i \neq e_j$ for at least two agents. If $e_i > e_j$, then Agent $i$ is willing to  buy the security from Agent $j$ at some intermediate price $e_i > p > e_j$. We can model this trade by having two new securities, $X_i = X-{\bf p}$ and $X_j = -X + {\bf p}$, where ${\bf p}$ is the security that pays $p$ at all states, so we have $X_i+X_j = {\bf 0}$. Both agents have strictly positive expected value from their respective securities, and this trade is feasible because the two securities add up to 0 at all states. 

The two approaches are complementary. We can view $X$ as the `fundamental' asset which pays according to the state (e.g. the value of the oil well), whereas securities $X_i, X_j$ denote the actual trade that has been agreed. With $n$ agents, where $n$ is even, we can order $e_1 > e_2 > \ldots > e_n>0$. By setting $p$ such that $e_{n/2} > p > e_{n/2 + 1}$, we can define $X_i = X - {\bf p}$ for $i \leq n/2$ and $X_i = -X + {\bf p}$ otherwise. We then have common knowledge trade with multiple securities. The converse is not necessarily true. To see this, suppose that $\{X_i\}_{i \in I}$ satisfy common knowledge trade with multiple securities. We can define $X = \underset{j \in I}{\sum} \lambda_j X_i$, so we need to find $\lambda = \{\lambda_1, \ldots, \lambda_n\}$ such that $\underset{j \in I}{\sum} \lambda_j E_{p_i}[ X_j | \Pi_i(\omega)] = e_i$, for all $\omega \in \Omega$ and $i,j \in I$. As there are many more equations than unknowns, the system may not have a solution.

We now allow for all possible priors and extend our analysis to multiple securities. We  say that a collection $\{X_i\}_{i \in I}$ of securities is tradable if they add up to zero at all states, $\underset{i \in I}{\sum}X_i = 0$, so that the trade is feasible, and every Agent \is maximum value of $X_i$ given each of his partition cells is strictly positive. If the last condition was not satisfied for some partition cell, then the expectation of $X_i$ would be negative for all priors, hence Agent $i$ would not agree to participate in the trade.

\begin{Definition}
Securities $\{X_i\}_{i \in I}$ are tradable if $\underset{i \in I}{\sum}X_i = 0$ and $\underset{\omega_0 \in \Pi_i(\omega)}{\max} X(\omega_0) > 0$, for all $i \in I$ and $\omega \in \Omega$.
\end{Definition}

We say that there is common knowledge trade given a collection of tradable securities if it is common knowledge that each Agent \is expected value of $X_i$ is $e_i>0$. 

\begin{Definition}
There is common knowledge trade at $\omega$ (with tradable securities $\{X_i\}_{i \in I}$)  if the event ``agent \is expectation of $X_i$ is $e_i >0$, for all $i \in I$'' is common knowledge at $\omega$. 
\end{Definition}

Note that $e_i$ in this definition denotes the expected profits of Agent $i$, whereas $e_i$ in the original definition denotes \is expected value of $X$. \cite{samet98} uses a weaker definition, which only requires that the expectations are strictly positive. He shows that there is a common prior that generates a  given collection of posterior beliefs, one for each agent and his partition cells, if and only if there is no common knowledge trade with multiple securities.

%Common knowledge trade with $X$ implies common knowledge trade with multiple securities. To see this, suppose $e_1 > e_2 > \ldots > e_n>0$ and $n$ is even. By setting $p$ such that $e_{n/2} > p > e_{n/2 + 1}$, we can define $X_i = X - {\bf p}$ for $i \leq n/2$ and $X_i = -X + {\bf p}$ otherwise. We then have common knowledge trade with multiple securities. 

We extend threshold verifiability  in a similar way.

\begin{Definition}
Tradable securities $\{X_i\}_{i \in I}$ are threshold verifiable  at $\omega$ if, whenever $X_i(C(\omega))$ is not constant for some $i \in I$, there exist $i \in I$ and $\omega',\omega'' \in C(\omega)$ such that $\underset{\omega_0 \in \Pi_i(\omega')}{\max}X_i(\omega_0) < x < \underset{\omega_0 \in \Pi_i(\omega'')}{\min}X_i(\omega_0)$, for some $x \in {\mathbb R}$.
\end{Definition}

We finally have the following Proposition, which is an extension of Theorem \ref{char}. The proof is almost identical so it is omitted.%
\footnote{Note that we do not require that each $X_i$ pays differently across states, as in the proof of Theorem \ref{char}. The reason is that $e_i = e_j$ is allowed here, as they are the expected profits, not the agents' expected value of $X$.}

\begin{Proposition}
Tradable securities $\{X_i\}_{i \in I}$ are threshold verifiable at $\omega$ if and only if there is no common knowledge trade at $\omega$, for any set of priors.
\end{Proposition}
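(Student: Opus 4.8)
The plan is to run the proof of Theorem~\ref{char} almost verbatim, with the single security $X$ replaced everywhere by the agent-specific security $X_i$, and exploiting the two defining features of a tradable collection: $\sum_{i\in I}X_i=0$ pointwise and $\max_{\omega_0\in\Pi_i(\omega)}X_i(\omega_0)>0$ for all $i\in I$ and $\omega\in\Omega$.

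For the ``if'' direction, suppose $\{X_i\}_{i\in I}$ is threshold verifiable at $\omega$. If every $X_i(C(\omega))$ is constant, let $c_i$ be the value taken by $X_i$ on $C(\omega)$; since $\Pi_i(\omega')\subseteq C(\omega)$ for each $\omega'\in C(\omega)$, agent $i$'s conditional expectation of $X_i$ equals $c_i$ at every such state, while $\sum_{i\in I}X_i=0$ forces $\sum_{i\in I}c_i=0$, so the $c_i$ cannot all be strictly positive and the event ``agent $i$'s expectation of $X_i$ is $e_i>0$ for all $i$'' already fails at $\omega$. If instead some $X_j(C(\omega))$ is not constant, threshold verifiability yields $i\in I$ and $\omega',\omega''\in C(\omega)$ with $\max_{\omega_0\in\Pi_i(\omega')}X_i(\omega_0)<x<\min_{\omega_0\in\Pi_i(\omega'')}X_i(\omega_0)$; then for every prior agent $i$'s conditional expectation of $X_i$ is strictly below $x$ at $\omega'$ and strictly above $x$ at $\omega''$, hence not constant on $C(\omega)$, so that event is not common knowledge. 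Either way there is no common knowledge trade, for any priors.

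For the converse, suppose $\{X_i\}_{i\in I}$ is not threshold verifiable at $\omega$: some $X_j(C(\omega))$ is not constant, and for every $i\in I$ and all $\omega',\omega''\in C(\omega)$ we have $\max_{\omega_0\in\Pi_i(\omega')}X_i(\omega_0)\ge\min_{\omega_0\in\Pi_i(\omega'')}X_i(\omega_0)$. Writing $m_i^{\omega'}$ and $M_i^{\omega'}$ for the minimum and maximum of $X_i$ on $\Pi_i(\omega')$, this says exactly that $\bigcap_{\omega'\in C(\omega)}[m_i^{\omega'},M_i^{\omega'}]=[\max_{\omega'}m_i^{\omega'},\min_{\omega'}M_i^{\omega'}]\neq\emptyset$ for each $i$, and its right endpoint $\min_{\omega'}M_i^{\omega'}$ is strictly positive by tradability. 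I would then pick $k_i$ in this intersection with $k_i>0$, interpret it as agent $i$'s constant target expectation of $X_i$, and reproduce the posterior-then-prior construction of Theorem~\ref{char} to obtain a full-support prior $p_i$ whose posterior on each cell $\Pi_i(\omega')$, $\omega'\in C(\omega)$, has expectation $k_i$; since $k_i$ is then constant across $i$'s cells in $C(\omega)$ and $k_i>0$ for all $i$, the event ``agent $i$'s expectation of $X_i$ is $k_i>0$ for all $i$'' is common knowledge at $\omega$, so there is common knowledge trade.

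The step that is genuinely not ``identical'', and the main obstacle, is this last construction: a full-support posterior on a cell $\Pi_i(\omega')$ on which $X_i$ is not constant has expectation strictly inside $(m_i^{\omega'},M_i^{\omega'})$, so one needs the chosen $k_i$ strictly interior to every such interval. This is automatic when the intersection $[\max_{\omega'}m_i^{\omega'},\min_{\omega'}M_i^{\omega'}]$ is non-degenerate (it then forces every cell to be non-value-constant and $k_i$ interior to each), and value-constant cells impose no constraint beyond $k_i$ equalling their common value; the only bad case is $k_i$ getting pinned to an endpoint of a non-value-constant cell, which is exactly what the ``pays differently'' hypothesis excludes in Theorem~\ref{char}. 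Absent such a hypothesis the construction can break down — e.g.\ with two agents, $\Omega=\{s_1,s_2,s_3,s_4\}$, $\Pi_j$ trivial, $\Pi_i=\{\{s_1,s_2\},\{s_3,s_4\}\}$, $X_i=(1,2,-1,1)$ and $X_j=-X_i$, the pair is tradable and not threshold verifiable, yet $E_{p_i}[X_i\mid\{s_1,s_2\}]\in(1,2)$ and $E_{p_i}[X_i\mid\{s_3,s_4\}]\in(-1,1)$ can never coincide, so there is no common knowledge trade for any priors. I would therefore either reinstate the analogue of ``pays differently'' for each $X_i$, or restrict the statement to collections whose interval intersections are non-degenerate, before asserting that the construction carries over and hence that the Proposition holds.
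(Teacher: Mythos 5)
Your analysis is correct and identifies a genuine gap in the paper's own (omitted) argument. The paper's footnote claims the ``pays differently'' hypothesis of Theorem~\ref{char} can be dropped here because the multiple-securities definition of common knowledge trade permits $e_i = e_j$. That addresses only one of the two roles the hypothesis plays in the proof of Theorem~\ref{char}: it is also what guarantees that each intersection $\bigcap_{\omega'\in C(\omega)}[m_i^{\omega'},M_i^{\omega'}]$ is not a singleton, so that $k_i$ can be chosen strictly interior to every non-value-constant cell interval and a full-support posterior on each cell with conditional expectation $k_i$ exists. Without some substitute for that non-degeneracy, the construction in the converse direction breaks exactly as you describe.

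Your counterexample is valid. With $\Omega=\{s_1,s_2,s_3,s_4\}$, $\Pi_j=\{\Omega\}$, $\Pi_i=\{\{s_1,s_2\},\{s_3,s_4\}\}$, $X_i=(1,2,-1,1)$ and $X_j=-X_i$: tradability holds (pointwise zero-sum; the cell maxima of $X_i$ are $2$ and $1$, that of $X_j$ is $1$, all positive); threshold verifiability fails ($\Pi_j$ is trivial, and for agent $i$ the cell intervals $[1,2]$ and $[-1,1]$ meet only at $1$, leaving no strict threshold); yet any full-support prior puts agent $i$'s conditional expectation of $X_i$ in $(1,2)$ on one cell and in $(-1,1)$ on the other, so it can never be constant on $C(\omega)=\Omega$ and there is no common knowledge trade for any priors. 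This contradicts the ``only if'' direction of the Proposition as stated, so the footnote's reasoning for dropping the hypothesis is incomplete. Either of your proposed repairs — reinstating a ``pays differently'' condition for each $X_i$, or requiring directly that the intersections $\bigcap_{\omega'\in C(\omega)}[m_i^{\omega'},M_i^{\omega'}]$ be non-degenerate — would restore the Theorem~\ref{char} construction and make the Proposition hold; the rest of your argument, including the observation that constancy of all $X_i$ on $C(\omega)$ forces $\sum_i c_i=0$ and hence rules out all-positive expected profits, is correct.
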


\bibliographystyle{apalike}

\bibliography{master}

\end{document}